\documentclass{neus2025}

\usepackage{amsfonts}
\usepackage{textcomp}
\usepackage{tikz}
\newcommand{\reals}{\mathbb{R}}
\newcommand{\nats}{\mathbb{N}}
\newcommand{\identity}{\mathbb{I}}
\usepackage{times}
\usepackage{appendix}

\usepackage{subcaption}
\usepackage[rightcaption]{sidecap}

\title[$k$-NBCs for Unknown Nonlinear Dynamics]{$k$-Inductive Neural Barrier Certificates for Unknown Nonlinear Dynamics}

\author{%
 \Name{Ben Wooding} \Email{ben.wooding@vanderbilt.edu}\\
 \Name{Hongchao Zhang} \Email{hongchao.zhang@vanderbilt.edu}\\
 \Name{Taylor T Johnson} \Email{taylor.johnson@vanderbilt.edu}\\
 \addr Vanderbilt University, Nashville, TN, USA
 \AND
 \Name{Abolfazl Lavaei} \Email{abolfazl.lavaei@newcastle.ac.uk}\\
 \addr Newcastle University, Newcastle upon Tyne, UK%
}

\begin{document}

\maketitle

\begin{abstract}
  While conventional ($k=1$) discrete-time barrier certificate conditions impose strict safety constraints by requiring the function to be non-increasing at every step, $k$-inductive barrier certificates relax this by allowing a temporary increase—up to $k-1$ times, each within a threshold $\epsilon$—while maintaining overall safety, and improving flexibility. This paper leverages neural networks and constructs $k$-inductive neural barrier certificates ($k$-NBCs) for (partially) unknown nonlinear systems. While neural networks offer scalability in the design process, they lack formal guarantees, requiring additional approaches such as counterexample-guided inductive synthesis (CEGIS) with satisfiability modulo theories (SMT) for verification. However, the CEGIS-SMT framework requires knowledge of system dynamics, which is unavailable in practical settings. To address this, we leverage the generalization of the Willems \textit{et al.}'s fundamental lemma, using a \emph{single state trajectory}, to construct a data-driven representation of {(partially)} unknown models for SMT verification without sacrificing accuracy. Additionally, CEGIS-SMT further removes the constraint of restricting barrier certificates to specific function classes, such as sum-of-squares, enabling greater flexibility in their design. We validate our approach on three nonlinear case studies with {(partially)} unknown dynamics.
\end{abstract}

\begin{keywords}
 Data-driven verification, neuro-symbolic AI, barrier certificate, safety, formal methods
\end{keywords}

\section{Introduction}
\label{sec:introduction}
Neural barrier certificates (NBCs) are growing in popularity as a technique to provide safety over dynamical systems~\citep{peruffo2021automated,prajna2004safety,qin2021learning}. NBCs are \textit{neuro-symbolic}~\citep{garcez2023neurosymbolic}, integrating neural approaches (neural networks, deep learning, \emph{etc.}) with symbolic approaches (logic, rules, explicit reasoning, \emph{etc.}). One approach to design NBCs is a two-step closed-loop hierarchy: (i) leveraging a neural network as a \emph{function approximator} to generate a candidate that satisfies desired system properties using a collection of data samples, and  (ii) verification that the candidate satisfies the property across the whole state space using \emph{a known model} and satisfiability modulo theories (SMT) solvers~\citep{z3,dreal}. If the second step fails, counterexamples refine the candidate NBC from the first phase, and the loop iterates until a candidate passes the second phase or the process is manually terminated. This loop is known as \emph{counterexample-guided inductive synthesis (CEGIS)}.

A key strength of designing NBCs with neural networks is their ability to generate \emph{general nonlinear candidates}, unlike the more common \emph{sum-of-squares} approach, which is restricted to polynomial functions~\citep{SOSTOOLS,parrilo2020sum}. Moreover, verifying an NBC candidate using SMT solvers is computationally more efficient than directly searching for a solution via SMT. However, a limitation of SMT-based NBCs is their \emph{dependence on prior knowledge of system dynamics} to guarantee validity on unsampled points.

Data-driven methods offer a promising alternative to deriving safety certificates when the system model is unavailable. Two widely used techniques are \emph{the scenario approach}~\citep{calafiore2006scenario,campi2018introduction,kazemi2024data} and the \emph{single-trajectory approach}, which builds on Willems \textit{et al.}'s fundamental lemma~\citep{willems2005note, de2019formulas,verhoek2021fundamental}. The scenario approach gives probabilistic guarantees that strengthen with more data, while the single-trajectory approach leverages a \emph{persistently excited} state trajectory to infer system behavior. In this work, we adopt the latter to extend the CEGIS-SMT framework for NBCs, removing the assumption of a  known model \emph{a priori}.

This work focuses on discrete-time dynamical systems and introduces a more general class of NBCs, termed $k$-inductive neural barrier certificates ($k$-NBCs), by leveraging $k$-induction~\citep{Anand2021kSafety}. Specifically, we address system \emph{safety}, ensuring that a system initialized in a predefined region never transitions into an unsafe zone. The $k$-induction framework relaxes the strict non-increasing constraint of barrier certificates~\citep{prajna2004safety}, allowing temporary increases over a finite horizon $k$ before enforcing a decrease, as illustrated in Figure~\ref{fig:k-BC}.

\begin{SCfigure}[1.2][t]
    \begin{tikzpicture}[scale=0.8]
    \node at (1.8,2) {$X$};
    \draw[thick] (0,0) ellipse (4.5cm and 2.5cm);
    \draw[thick,fill=blue!20, blue,opacity=0.4] (-1.5,0) ellipse (1.5cm and 1.5cm);
    \node at (-1.5,1) {$X_\mathcal{I}$};
    \node at (-1.5,0.5) {$\mathcal{B}(x) \leq 0$};
    \draw[thick,fill=red!20, red,opacity=0.4] (2.7,0) ellipse (1cm and 1cm);
    \node at (2.7,0.5) {$X_\mathcal{U}$};
    \node at (2.7,0) {$\mathcal{B}(x) > \lambda$};
    \draw[thick, red] (-1,0) ellipse (2.5cm and 2cm);
    \draw[thick, blue,dashed] (-1,0) ellipse (2cm and 1.8cm);
    \draw[thick, black, ->] plot[smooth] coordinates { 
        (-0.5,0.8) 
        (0.2,0.6) 
        (0.8,0.3) 
        (1.2,-0.1) 
        (0.6,-0.8) 
        (1,-1.1) 
        (-0.3,-1.5) 
        (-1.1,-0.4) 
        (-1.6,-0.8) 
        (-1.5,0)
    };
    \end{tikzpicture}
    \caption{Safety illustration: The black trajectory, starting in the initial region $X_{\mathcal{I}} \subset X$, may temporarily increase (crossing the blue-dashed 0-level set) as long as it decreases within $k$ steps and never exceeds the red level set $\lambda = (k-1)\epsilon$. This ensures it does not reach the unsafe region $X_{\mathcal{U}} \subset X$, where $X_{\mathcal{I}} \cap X_{\mathcal{U}} = \emptyset$.}
    \label{fig:k-BC}\vspace{-0.3cm}
\end{SCfigure}

\subsection{Contributions}
Our contributions are twofold:
\begin{itemize}
    \item We construct a data-driven closed-loop model of {(partially)} unknown system dynamics using the generalization of Willems \emph{et al.}'s fundamental lemma. Since CEGIS-SMT requires a known model for verifying candidate neural barrier certificates, our approach eliminates the need for an a priori closed-form system representation, enabling verification with only a single state trajectory.
    \item We significantly relax the safety conditions with $k$-NBCs; removing the non-increasing requirement, and enabling \emph{general} nonlinear barrier certificates for general nonlinear {(partially)} \emph{unknown} dynamical systems.
\end{itemize} 

\subsection{Related Work}
Prior work has explored the integration of data-driven methods with barrier certificates. Studies such as~\citep{nejati2023formal,salamati2024data,anand2023formally} employ barrier certificates for verification using the scenario approach, while~\citep{nejati2022data,samari2024singletrajectory} leverage Willems \emph{et al.}'s fundamental lemma for controller synthesis. However, these approaches do not incorporate $k$-induction or \emph{neural} barrier certificates. $k$-inductive barrier certificates have been studied in~\citep{murali2022scenario} using the scenario and sum-of-squares approaches, but only for linear systems. The closest work to ours is~\citep{wooding2024learning}, which designs $k$-inductive control barrier certificates for nonlinear dynamics using sum-of-squares. However, that approach relies on\emph{ nonlinearity cancellation through control input}, making verification of nonlinear systems infeasible. Furthermore, this work extends $k$-NBCs beyond the quadratic barrier certificate constraints imposed in previous approaches, allowing greater flexibility in their design.

NBCs have been explored in~\citep{yang2023model}, but in a non-formal manner, achieving ``near-zero" constraint violations through reinforcement learning. In~\citep{zhao2020synthesizing,NEURIPS2024_b7868ded}, NBCs are synthesized using SMT solvers for controller design but without the CEGIS approach. Both works~\citep{yang2023model,zhao2020synthesizing} rely on a known system model and focus on conventional NBCs rather than the relaxed $k$-inductive version. In contrast,\citep{zhou2022neural,hsieh2025certifying} address black-box systems, eliminating the need for a model in neural approaches. However,\citep{zhou2022neural} assumes knowledge of Lipschitz constants, while~\citep{hsieh2025certifying} employs CEGIS without SMT solvers and establishes termination guarantees. Notably, both focus on stability rather than barrier certificates, as considered in this work. Other neural-certificate based works include~\citep{debauche2025formal,abate2020formal,mathiesen2022safety} and the survey of~\cite{dawson2023safe}. Several software tools have been developed in this area: \textsf{FOSSIL}~\citep{abate2021fossil,edwards2024fossil} for NBCs, \textsf{PRoTECT}~\citep{wooding2024protect} for model-based barrier certificates, and \textsf{TRUST}~\citep{gardner2025trust} for data-driven barrier certificates.

\subsection{Organization}
The paper is structured as follows: Section~\ref{sec:problem} defines the {(partially)} \emph{unknown} system under consideration. Section~\ref{sec:data-drivenmodels} presents how the generalization of Willems \emph{et al.}'s fundamental lemma is used to represent a single state trajectory as a system model. Section~\ref{sec:k-induction} introduces data-driven $k$-inductive barrier certificate conditions for nonlinear systems. Section~\ref{sec:nbc} details the use of the CEGIS-SMT framework to generate and verify candidate $k$-NBCs for {(partially)} unknown systems. Section~\ref{sec:case-studies} demonstrates the results on a deliberately challenging \emph{highly nonlinear} case study, followed by conclusions in Section~\ref{sec:conclusion}. Due to space constraints, proofs and additional (physical) case studies are presented in the appendix.

\subsection{Notation}
 We use $\reals,$ $\reals_{\geq 0}$, $\reals_{>0}$, $\nats$ and $\nats_{>0}$ to denote the real numbers, non-negative real numbers, positive real numbers, and the non-negative and positive integers, respectively. We define an $(n \times n)$ identity matrix by  $\identity_n$. For a state space $X$, we define $X_{\mathcal{I}}\subset X$ and $X_{\mathcal{U}} \subset X$ as the set of initial states and set of unsafe states, respectively, where $X_{\mathcal{I}} \cap X_{\mathcal{U}} = \emptyset$.

\section{Problem Formulation}
\label{sec:problem}

\begin{definition}
	\label{def:system-description-NS}
	A discrete-time nonlinear system (dt-NS) in this work is described by
	\begin{equation}\label{eq:dt-NS1}
		\Sigma: x^+ = f(x),    \end{equation}
	where $x^+$ represents the state variables at the next time step, i.e., $x^+ := x(k + 1), \; k \in \mathbb{N}$, $x\in X$ is a system state with $X\subset\mathbb{R}^n$ being the state set, and $f: X \to X$ is an unknown transition map.
\end{definition}

For the sake of simplicity, one can reformulate the system in \eqref{eq:dt-NS1} as \begin{equation}
	\label{eq:dt-NS}
	\Sigma: x^+ = A\mathfrak{D}(x),
\end{equation}
where $A\in\reals^{n\times N}$ is a constant matrix,  and $\mathfrak{D}(x)\in\reals^N$ is a vector of nonlinear terms in state $x$. We denote by $x_{x_0}(k)$ the state trajectory of $\Sigma$ at time $k\in\nats$ starting from an initial condition $x_0=x(0)$.

While we assume that the matrix $A$ is \emph{unknown}, which reflects practical scenarios, a dictionary $\mathfrak{D}$ is assumed to be available, constructed to adequately capture the true dynamics by being \emph{sufficiently comprehensive} to encompass all possible terms in the actual system, even if some irrelevant terms are included. We have thus far used the term \emph{(partially) unknown} to emphasize the availability of an extensive dictionary together with unknown parameters; however, for simplicity of presentation, we henceforth use the term \emph{unknown}, as the distinction should now be clear from the context. We also note that there are some related studies in the literature that allow $\mathfrak{D}$ to exclude certain nonlinearities  by assuming boundedness of the neglected terms with known bounds; see, \emph{e.g.,}~\citep{de2023cancellation,samari2025data}.

Existing works on neural barrier certificates often assume a \emph{{fully} known} closed-form system model~\eqref{def:system-description-NS}, which enables validation through SMT solvers. In the next section, we relax this assumption by adopting a data-driven approach.

\section{Data-Driven Models}
\label{sec:data-drivenmodels}

We collect state data from the \emph{unknown} dt-NS over the time horizon $[0,T]$, where $T\in\mathbb{N}_{>0}$ is the number of collected samples:
\begin{align}
	\label{eq:state-data}
	&\mathcal{X}_{0,T} = [x(0),x(1),\ldots,x(T-1)]\in\reals^{n\times T},\\
	\label{eq:derivative-data}
	&\mathcal{X}_{1,T} = [x(1),x(2),\ldots,x(T)]\in\reals^{n\times T}.
\end{align}
We also consider the following data-driven representation of the extensive dictionary of nonlinear functions:
\begin{align}
	\label{eq:dictionary-data}
	\mathcal{D}_{0,T} = [\mathfrak{D}(x(0)),\mathfrak{D}(x(1)),\ldots,\mathfrak{D}(x(T-1))],
\end{align}
where $\mathcal{D}_{0,T}\in\reals^{N\times T}$. While the data is collected noise-free, the current approach can be extended to noisy data by building upon the results of~\citep{guo2021data}.

We first consider the following lemma for the dt-NS system, to obtain a data-based representation of a closed-loop dt-NS (with the continuous-time equivalent by~\citet{nejati2022data}).

\begin{lemma}
	\label{lem:non-Q-matrix}
	Let $Q(x)$ be a $(T\times N)$ matrix such that
	\begin{equation}
		\label{eq:non-Q-matrix-X0}
		\mathbb{I}_N = \mathcal{D}_{0,T}Q(x),
	\end{equation}
	where $\mathcal{D}_{0,T}$ is an $(N\times T)$ full row-rank matrix. The closed-loop system $x^+=A\mathfrak{D}(x)$ has the following data-based representation:
	\begin{align}
		\label{eq:closed-loop-dtNS}
		x^+ = \mathcal{X}_{1,T}Q(x)\mathfrak{D}(x).
	\end{align}
\end{lemma}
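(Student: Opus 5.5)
The plan is a direct algebraic substitution that uses the noise-free data relation. First, I would record the identity linking the two data matrices. Because the trajectory is generated by $\Sigma$ in \eqref{eq:dt-NS}, each column satisfies $x(t+1)=A\mathfrak{D}(x(t))$ for $t=0,\dots,T-1$. Stacking these $T$ relations column by column, using the definitions \eqref{eq:state-data}, \eqref{eq:derivative-data} and \eqref{eq:dictionary-data}, gives
\begin{equation*}
\mathcal{X}_{1,T} = A\,\mathcal{D}_{0,T}.
\end{equation*}
This is the only place the (unknown) dynamics enter. It holds exactly because the data are collected without noise.

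Second, I would insert the identity factor from \eqref{eq:non-Q-matrix-X0} into the true dynamics. For any $x\in X$, write
\begin{equation*}
x^+ = A\mathfrak{D}(x) = A\,\mathbb{I}_N\,\mathfrak{D}(x) = A\,\mathcal{D}_{0,T}Q(x)\,\mathfrak{D}(x).
\end{equation*}
Substituting $A\mathcal{D}_{0,T}=\mathcal{X}_{1,T}$ from the first step then yields $x^+=\mathcal{X}_{1,T}Q(x)\mathfrak{D}(x)$, which is \eqref{eq:closed-loop-dtNS}. The unknown matrix $A$ disappears, and only measured data, the known dictionary, and $Q(x)$ remain.

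The only real point needing care is that a $Q(x)$ satisfying \eqref{eq:non-Q-matrix-X0} actually exists, so that the representation is not vacuous. This is where the full row-rank assumption on $\mathcal{D}_{0,T}$ is used, and it requires $T\ge N$. Under that assumption $\mathcal{D}_{0,T}\mathcal{D}_{0,T}^\top$ is invertible, and one admissible choice is the right inverse
\begin{equation*}
Q = \mathcal{D}_{0,T}^\top\bigl(\mathcal{D}_{0,T}\mathcal{D}_{0,T}^\top\bigr)^{-1}.
\end{equation*}
More generally, $Q(x)$ may be any right inverse depending on $x$. I would remark that the conclusion holds for every such choice, since the second step used only the identity $\mathcal{D}_{0,T}Q(x)=\mathbb{I}_N$ and not the particular form of $Q(x)$.
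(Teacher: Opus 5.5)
Your proposal is correct and matches the paper's proof: insert $\mathbb{I}_N=\mathcal{D}_{0,T}Q(x)$ into $A\mathfrak{D}(x)$ and replace $A\mathcal{D}_{0,T}$ with $\mathcal{X}_{1,T}$. Two things you add that the paper leaves implicit are the column-stacking derivation of $\mathcal{X}_{1,T}=A\mathcal{D}_{0,T}$ from noise-free data, and the existence of $Q(x)$ via the right inverse. The paper only notes in a remark that $T\geq N$ is necessary.
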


\begin{remark}
	To enforce $\mathcal{D}_{0,T}$ to be full row-rank, the number of samples $T$ should be at least $N$~\citep{de2023cancellation} as a necessary condition. Since this matrix is derived based on sampled data, this rank condition is readily satisfiable.
\end{remark}

Since the $k$-step evolution for dt-NS is required in our setting, we describe it using a recursive expression in the following theorem.

\begin{lemma}\label{lem:recursive-dt-NS}
	Under Lemma  \ref{lem:non-Q-matrix}, the data-based evolution of a dt-NS after $k$ steps is described by
	\begin{align}
		\label{eq:non-closed-loop-k}
		x^{k+}=f_k(x) = \mathcal{X}_{1,T}Q(f_{k-1}(x))\mathfrak{D}(f_{k-1}(x)),
	\end{align}
	where $f_1(x) = \mathcal{X}_{1,T}Q(x)\mathfrak{D}(x)$.
\end{lemma}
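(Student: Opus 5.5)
We argue by induction on $k\in\nats_{>0}$. Set $f_0(x):=x$ and let $f_k(x)$ denote the state $x_x(k)$ of $\Sigma$ after $k$ steps from initial condition $x$; the claim is that this $k$-step map satisfies the stated data-based recursion. The key observation is that Lemma~\ref{lem:non-Q-matrix} is pointwise: for every $y\in X$ it gives $f(y)=A\mathfrak{D}(y)=\mathcal{X}_{1,T}Q(y)\mathfrak{D}(y)$, where $Q(y)$ is any right inverse of $\mathcal{D}_{0,T}$ (which exists by the full row-rank assumption). The whole proof therefore reduces to applying this identity at the points $y=f_{k-1}(x)$.

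\textbf{Base case $k=1$.} Here $f_1(x)=f(x)=A\mathfrak{D}(x)$, and Lemma~\ref{lem:non-Q-matrix} applied at $y=x$ gives $f_1(x)=\mathcal{X}_{1,T}Q(x)\mathfrak{D}(x)$, as stated. For completeness, the identity behind that lemma is the following. Because the data are noise-free, $\mathcal{X}_{1,T}=A\mathcal{D}_{0,T}$ column by column, since $x(t+1)=A\mathfrak{D}(x(t))$. Hence
\begin{equation*}
A=A\,\identity_N=A\mathcal{D}_{0,T}Q(y)=\mathcal{X}_{1,T}Q(y).
\end{equation*}

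\textbf{Inductive step.} Assume $f_{k-1}(x)$ is the state after $k-1$ steps. Since $f:X\to X$, the point $y:=f_{k-1}(x)$ lies in $X$. Then $f_k(x)=f(f_{k-1}(x))=A\mathfrak{D}(y)$, and substituting $A=\mathcal{X}_{1,T}Q(y)$ gives
\begin{equation*}
f_k(x)=\mathcal{X}_{1,T}Q(f_{k-1}(x))\mathfrak{D}(f_{k-1}(x)),
\end{equation*}
which is exactly~\eqref{eq:non-closed-loop-k}.

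\textbf{Main subtlety.} The only delicate point is that $Q$ is state-dependent. One must check that the equality $\mathcal{D}_{0,T}Q(\cdot)=\identity_N$ is imposed at every state, and in particular at the propagated point $f_{k-1}(x)$, not only at the initial state $x$. Since Lemma~\ref{lem:non-Q-matrix} requires $Q(x)$ to be a right inverse of $\mathcal{D}_{0,T}$ for each $x$, and the invariance $f(X)\subseteq X$ keeps the propagated points inside $X$, this condition holds at every step and the induction goes through.
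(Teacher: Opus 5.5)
Your proof is correct and follows the paper's argument. The paper likewise writes $x^+ = A\mathfrak{D}(x) = \mathcal{X}_{1,T}Q(x)\mathfrak{D}(x) = f_1(x)$, then applies the same identity at the propagated state $f_1(x)$ to obtain $f_2(x)$, and concludes ``by induction''; your remark that $\mathcal{D}_{0,T}Q(\cdot)=\identity_N$ must hold at every propagated point, which stays in $X$ because $f(X)\subseteq X$, makes explicit what the paper uses implicitly.
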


\section{$k$-Inductive Safety Barrier Certificates}
\label{sec:k-induction}

We define safety, as illustrated in Figure~\ref{fig:k-BC}, by the following:

\begin{definition}[Safety]
    A dt-NS $\Sigma$ is safe if all trajectories $x_{x_0}(k)$ evolving from each state in the initial region $x_0\in X_\mathcal{I}$ never eventually reach the unsafe region $X_\mathcal{U}$, \emph{i.e.}, $x_{x_0}(k)\notin X_{\mathcal U},~\forall k\in\nats$. 
\end{definition}

We consider $k$-inductive safety barrier certificates ($k$-BCs) as defined in~\citep{Anand2021kSafety}. Importantly, we \textbf{do not} impose any predefined structure on the $k$-BC $\mathcal{B}(x)$, such as assuming it to be a quadratic polynomial of the form $x^\top Px$, a limitation of prior data-driven work~\citep{wooding2024learning}.

\subsection{General $k$-BC Conditions}
Building on Lemma~\ref{lem:non-Q-matrix} and Lemma~\ref{lem:recursive-dt-NS}, we establish the $k$-BC conditions for data-driven safety guarantees of an unknown dt-NS in the following proposition, adapting the $k$-BC definition from~\citep{Anand2021kSafety}.

\begin{proposition}
	\label{thm:data_k-inductive-CBC}
	Given a dt-NS $\Sigma$ in Definition~\ref{def:system-description-NS}, and data-driven system evolutions~\eqref{eq:closed-loop-dtNS} and~\eqref{eq:non-closed-loop-k}, a function $\mathcal{B}: X\rightarrow \reals$ is  a $k$-inductive barrier certificate ($k$-BC) for $\Sigma$ if there exist $k\in\nats_{>0}$ and $\epsilon\in\reals_{\geq0}$, such that
	\begin{subequations}
		\begin{align}
			\label{eq:safe_kCBC_cond1}
			\mathcal{B}(x)& \leq 0, &\forall x\in X_{\mathcal I}, \\
			\label{eq:safe_kCBC_cond2}
			\mathcal{B}(x)& > (k-1)\epsilon, &\forall x\in X_{\mathcal U},\\
			\label{eq:safe_kCBC_cond3}
			\mathcal{B}\big(\mathcal{X}_{1,T}Q(x)\mathfrak{D}(x)\big) &\leq \mathcal{B}(x) + \epsilon, \quad &\forall x\in X~\text{where}~\mathcal{B}(x) \leq (k-1)\epsilon, \\
			\label{eq:safe_kCBC_cond4} \mathcal{B}\big(\mathcal{X}_{1,T}Q(f_{k-1}(x))\mathfrak{D}(f_{k-1}(x))\big) &\leq \mathcal{B}(x), \quad &\forall x\in X~\text{where}~\mathcal{B}(x) \leq 0,
		\end{align}
		with $f_1(x) = \mathcal{X}_{1,T}Q(x)\mathfrak{D}(x)$ and $f_k(x) = \mathcal{X}_{1,T}Q(f_{k-1}(x))\mathfrak{D}(f_{k-1}(x))$.
	\end{subequations}
\end{proposition}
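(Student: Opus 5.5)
The plan is to reduce the data-driven conditions to the classical model-based $k$-inductive argument of \citep{Anand2021kSafety}. I will first show that the data-based maps coincide with the true dynamics. Then I will run a block-wise induction along every trajectory to show that $\mathcal{B}$ never exceeds $(k-1)\epsilon$. Finally, \eqref{eq:safe_kCBC_cond2} excludes the unsafe set.

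\textbf{Step 1 (data representation is exact).} The collected data satisfy $x(t+1)=A\mathfrak{D}(x(t))$, so $\mathcal{X}_{1,T}=A\mathcal{D}_{0,T}$. With $Q(x)$ from \eqref{eq:non-Q-matrix-X0} we get
\[
\mathcal{X}_{1,T}Q(x)\mathfrak{D}(x)=A\mathcal{D}_{0,T}Q(x)\mathfrak{D}(x)=A\mathfrak{D}(x)=f(x),
\]
which is exactly Lemma~\ref{lem:non-Q-matrix}. Applying this recursively, as in Lemma~\ref{lem:recursive-dt-NS}, gives $f_j=f^{j}$, the $j$-fold composition of $f$, for every $j\ge 1$. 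Since $f:X\to X$, all iterates stay in $X$. Hence \eqref{eq:safe_kCBC_cond3} reads $\mathcal{B}(f(x))\le\mathcal{B}(x)+\epsilon$ whenever $\mathcal{B}(x)\le(k-1)\epsilon$. Likewise, \eqref{eq:safe_kCBC_cond4}, whose left-hand side equals $\mathcal{B}(f_k(x))$, reads $\mathcal{B}(f^{k}(x))\le\mathcal{B}(x)$ whenever $\mathcal{B}(x)\le 0$. I expect this identification to be the only real obstacle. One must be careful that $Q(f_{k-1}(x))$ is evaluated at the propagated state and that the identity $\mathcal{D}_{0,T}Q(\cdot)=\identity_N$ holds pointwise at every point of $X$. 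Both follow from Lemma~\ref{lem:non-Q-matrix} applied at the point $f_{k-1}(x)\in X$.

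\textbf{Step 2 (induction along a trajectory).} Fix $x_0\in X_{\mathcal I}$ and write $x(t)=x_{x_0}(t)$. I will prove by induction on $m\in\nats$ the following claim: $\mathcal{B}(x(mk))\le 0$, and $\mathcal{B}(x(mk+j))\le j\epsilon$ for $0\le j\le k-1$.
\begin{itemize}
\item \emph{Base case.} $\mathcal{B}(x(0))\le 0$ holds by \eqref{eq:safe_kCBC_cond1}.
\item \emph{Intermediate steps within a block.} Suppose $\mathcal{B}(x(mk))\le0$. For $j=1,\dots,k-1$ we have $\mathcal{B}(x(mk+j-1))\le (j-1)\epsilon\le (k-1)\epsilon$. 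So \eqref{eq:safe_kCBC_cond3} applies and gives $\mathcal{B}(x(mk+j))\le j\epsilon$.
\item \emph{Closing the block.} Condition \eqref{eq:safe_kCBC_cond4} applied at $x(mk)$ gives $\mathcal{B}(x((m+1)k))=\mathcal{B}(f^{k}(x(mk)))\le\mathcal{B}(x(mk))\le 0$, which closes the induction.
\end{itemize}
The case $k=1$ is covered: there are no intermediate steps, and \eqref{eq:safe_kCBC_cond4} alone gives $\mathcal{B}(x(t+1))\le\mathcal{B}(x(t))\le 0$.

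\textbf{Step 3 (conclusion).} Every $t\in\nats$ can be written as $t=mk+j$ with $0\le j\le k-1$. Step 2 then gives $\mathcal{B}(x(t))\le j\epsilon\le (k-1)\epsilon$ for all $t$. If some $x(t)$ were in $X_{\mathcal U}$, \eqref{eq:safe_kCBC_cond2} would give $\mathcal{B}(x(t))>(k-1)\epsilon$, a contradiction. Therefore $x_{x_0}(t)\notin X_{\mathcal U}$ for all $t\in\nats$ and all $x_0\in X_{\mathcal I}$, so $\Sigma$ is safe and $\mathcal{B}$ is a $k$-BC.
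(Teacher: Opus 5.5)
Your proof is correct, but it takes a different route from the paper's.

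The paper proves this proposition by reduction to Anand et al.'s model-based definition. It substitutes Lemma~\ref{lem:non-Q-matrix} and Lemma~\ref{lem:recursive-dt-NS} into the one-step and $k$-step conditions \eqref{eq:proof1}--\eqref{eq:proof2}, which there are quantified over all of $X$. It then asserts that restricting them to the sublevel sets $\{\mathcal{B}\le(k-1)\epsilon\}$ and $\{\mathcal{B}\le 0\}$ only discards states unreachable from $X_{\mathcal I}$. The safety conclusion is deferred to Theorem~\ref{thm:k-ind-proof}, whose proof is an informal ``at most $k-1$ increases of $\epsilon$'' count.

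Your Step~1 matches the proofs of the two lemmas. Steps~2--3, however, replace the reduction with a direct block-wise induction anchored at times $0,k,2k,\dots$. This gives rigor exactly where the paper is thin. The relaxed conditions are strictly weaker than Anand et al.'s, so the substitution alone does not make $\mathcal{B}$ a $k$-BC in that sense. Moreover, \eqref{eq:safe_kCBC_cond4} is only available where $\mathcal{B}\le 0$, so the $k$-step decrease cannot be applied at an arbitrary time. Anchoring the blocks at multiples of $k$ is precisely what justifies the paper's ``reachable states'' claim.

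Your argument therefore also proves Theorem~\ref{thm:k-ind-proof}. It additionally shows that \eqref{eq:safe_kCBC_cond3} is only ever invoked at states with $\mathcal{B}\le(k-2)\epsilon$. The paper's route, in exchange, is short and presents the proposition as a data-driven instance of an existing definition, with safety inherited afterwards.

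Two small remarks. You are right to read ``$\mathcal{B}$ is a $k$-BC'' as ``$\mathcal{B}$ certifies safety of $\Sigma$'', since the relaxed conditions do not imply the all-of-$X$ conditions. Your $k=1$ case uses the implicit convention $f_0(x)=x$.
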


When $k=1$, the conditions reduce to those of conventional barrier certificates, implying that $k$-BCs with $k> 1$  are always less restrictive without compromising guarantees. Condition~\eqref{eq:safe_kCBC_cond3} relaxes the one-step evolution requirement from strict non-negativity to allowing an increase of up to $\epsilon$. This relaxation is balanced by condition~\eqref{eq:safe_kCBC_cond4}, which enforces strict non-negativity at $k$ future time steps. The term $(k-1)\epsilon$ in~\eqref{eq:safe_kCBC_cond2} plays a crucial role in preventing excessive relaxation, ensuring the system trajectory does not reach the unsafe region.

For completeness, we present the following theorem, adapted from~\citep[Theorem 3.2]{Anand2021kSafety}, providing safety guarantees for discrete-time dynamical systems by leveraging the notion of $k$-BCs from Proposition~\ref{thm:data_k-inductive-CBC}. 

\begin{theorem}
	\label{thm:k-ind-proof}
	Given a dt-NS $\Sigma$ and some $k\in\nats$, suppose $\mathcal{B}(x)$ is a $k$-BC for $\Sigma$ as in Proposition~\ref{thm:data_k-inductive-CBC}. Then, the state trajectory $x_{x_0}(k)$ remains safe (i.e., $x_{x_0}(k)\notin X_{\mathcal U},~\forall k$) for any $x_0\in X_{\mathcal I}$ with data-based dynamic representation satisfying conditions~\eqref{eq:safe_kCBC_cond3} and~\eqref{eq:safe_kCBC_cond4}.
\end{theorem}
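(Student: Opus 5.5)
We first replace the data-driven expressions by the true dynamics. By Lemma~\ref{lem:non-Q-matrix}, $\mathcal{X}_{1,T}Q(x)\mathfrak{D}(x) = A\mathfrak{D}(x)=f(x)$ for every $x\in X$. Iterating as in Lemma~\ref{lem:recursive-dt-NS}, induction on $j$ gives $f_j(x)=f^{(j)}(x)$, the $j$-fold composition of $f$, i.e.\ $f_j(x_{x_0}(t)) = x_{x_0}(t+j)$. Conditions~\eqref{eq:safe_kCBC_cond3} and~\eqref{eq:safe_kCBC_cond4} then become statements about the actual trajectory. With the shorthand $b(t):=\mathcal{B}(x_{x_0}(t))$ (we write $t$ for time, since $k$ is the induction depth):
\begin{itemize}
\item (C3) if $b(t)\le (k-1)\epsilon$, then $b(t+1)\le b(t)+\epsilon$;
\item (C4) if $b(t)\le 0$, then $b(t+k)\le b(t)$.
\end{itemize}

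The main claim, proved by strong induction on $t$, is: for every $t\in\nats$, writing $t=qk+r$ with $0\le r<k$, we have $b(qk)\le 0$ and $b(qk+r)\le r\epsilon\le (k-1)\epsilon$. Safety then follows immediately. If $x_{x_0}(t)\in X_{\mathcal U}$, condition~\eqref{eq:safe_kCBC_cond2} would give $b(t)>(k-1)\epsilon$, a contradiction.

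The induction runs as follows.
\begin{itemize}
\item \emph{Base.} Condition~\eqref{eq:safe_kCBC_cond1} gives $b(0)\le 0$ since $x_0\in X_{\mathcal I}$.
\item \emph{Within a block.} Suppose $b(qk)\le 0$. For $r=0,\dots,k-2$, if $b(qk+r)\le r\epsilon\le (k-1)\epsilon$, then (C3) applies and yields $b(qk+r+1)\le (r+1)\epsilon$. This covers every $r\le k-1$.
\item \emph{Across blocks.} Since $b(qk)\le 0$, (C4) gives $b((q+1)k)\le b(qk)\le 0$, which restarts the block.
\end{itemize}

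The only delicate point is the guard in (C3), $\mathcal{B}(x)\le (k-1)\epsilon$. We must check that each application stays inside it, which holds because we only step from indices $r\le k-2$ where $b\le (k-2)\epsilon$. We also need $\epsilon\ge 0$ so that $r\epsilon\le (k-1)\epsilon$. The case $k=1$ degenerates correctly: (C4) alone gives $b(t)\le 0$ for all $t$.

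We expect the main obstacle to be the identification $f_j = f^{(j)}$. It requires that $Q(x)$ satisfying~\eqref{eq:non-Q-matrix-X0} exists for every point visited, including the iterates $f_{j-1}(x)$ that appear in~\eqref{eq:safe_kCBC_cond4}. It also requires that these iterates remain in $X$, which holds because $f:X\to X$. With Lemma~\ref{lem:non-Q-matrix} taken as given pointwise on $X$, this step reduces to a straightforward induction.
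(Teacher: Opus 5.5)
Your proof is correct and follows the same route as the paper's. Start from $\mathcal{B}\le 0$ on $X_{\mathcal I}$, allow at most $k-1$ increments of $\epsilon$ via \eqref{eq:safe_kCBC_cond3} before \eqref{eq:safe_kCBC_cond4} resets the value to $\le 0$, so $\mathcal{B}$ never exceeds $(k-1)\epsilon$ and the unsafe set is never reached. Your block induction ($b(qk)\le 0$, $b(qk+r)\le r\epsilon$) is a rigorous version of the paper's informal ``maximal increase is $k-1$ steps of $\epsilon$'' argument, and it also makes explicit two points the paper leaves implicit: the guards of the relaxed conditions are always met along the trajectory, and $f_j$ coincides with the true $j$-fold composition of $f$ by Lemma~\ref{lem:non-Q-matrix}.
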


Note that conditions~\eqref{eq:safe_kCBC_cond3}-\eqref{eq:safe_kCBC_cond4} are not enforced over all $x\in X$, as Theorem~\ref{thm:k-ind-proof} ensures forward invariance, system trajectories outside this set can be ignored. The $k=2$ special case has~\eqref{eq:safe_kCBC_cond3}-\eqref{eq:safe_kCBC_cond4} both quantified for $x\in X$ where $\mathcal{B}(x)\leq 0$. Sum-of-squares approaches often impose the evolution condition on all states for computational simplicity, whereas SMT solvers enable this additional relaxation. We also present the theory applied to the less-general linear systems in Appendix~\ref{app:linear-systems}, showing simpler conditions can be utilized.

\section{$k$-Inductive Neural Barrier Certificates}
\label{sec:nbc}

We now present the main contribution of this work. We adapt the CEGIS-SMT framework to generate $k$-inductive neural safety barrier certificates ($k$-NBCs) that relax the assumption of knowing a closed-form model of $\Sigma$. We highlight again that to our knowledge, this is the first work designing and verifying NBCs using CEGIS-SMT entirely from data.

\subsection{CEGIS Procedure}

\begin{SCfigure}[1.2][t]
    \begin{tikzpicture}[scale=1.0]
		\node (init) at (-0.5,3) [draw, rectangle,fill=blue!20] {Samples $s\in X$};
		\node (dataset) at (-0.5,0) [draw, rectangle,fill=blue!20] {Dataset~$S$};
		\node (learner) at (2,0) [draw, rectangle,fill=blue!20] {Learner};
		\node (verifier) at (4.4,0) [draw, rectangle,fill=blue!20] {Verifier};
		\node (system) at (3,2) [draw, rectangle,fill=blue!20] {$s^{+} = f_1(s),~s^{k+} = f_k(s)$};
		\node (bx) at (6.8,0){$\mathcal{B}(x)$};
		
		\draw[->] (-1,2.75) -- ++ (0,-2.5);
		\draw[->] (dataset.east) -- (learner.west);
		\draw[->] (dataset.north) |- (system.west);
		\draw[->] (system.south) |-(3,1)-| (learner.north) node[midway, left] {$S^+,~S^{k+}$};
		\draw[->] (system.south) |-(3,1)-| (verifier.north) node[midway, right] {$S^+,~S^{k+}$};
		\draw[->] (learner.east) -- (verifier) node[midway,below] {$\mathcal{B}(\cdot)$};
		\draw[->] (verifier) |- node[near start,right] {\texttt{false}} node[midway, below] {$C\subset X$} (4,-1) -| (-0.5,-1) node[midway, below] {$S \leftarrow S \cup C$} |- (dataset.south);
		\draw[->] (verifier.east) -- (bx.west) node[midway, below] {\texttt{true}};
    \end{tikzpicture}\vspace{0.8cm}
    \caption{The CEGIS procedure for verification with $k$-NBCs. Data points $s\in S$ are sampled from the state space $X$. The \emph{Learner} then generates a candidate $k$-NBC $\mathcal{B}(\cdot)$ based on the dataset triple $(S,~S^+,~S^{k+})$. The \emph{Verifier} checks if the candidate $k$-NBC is valid for all states $x\in X$. If $\mathcal{B}(\cdot)$ is invalid, any counterexamples $c\in C$ are added to the dataset and the Learner generates a new candidate. The process is repeated until a valid $k$-NBC $\mathcal{B}(x)$ is found, or the process is manually terminated.}
    \label{fig:CEGIS}
\end{SCfigure}

We first describe the CEGIS procedure as visualized in Figure~\ref{fig:CEGIS}. At a high level, CEGIS has two main \emph{neuro-symbolic} components: a \emph{learner} (neural component) and a \emph{verifier} (symbolic component). The {learner} generates a candidate $k$-NBC based on the sampled data points. Then the {verifier} checks if the candidate $k$-NBC is valid for the whole state space. If the candidate $k$-NBC is invalid, counterexamples $C$ are added to the dataset and the learner generates a new candidate. This process is repeated until a valid $k$-NBC is found, or the process is manually terminated. We notate $\mathcal{B}(\cdot)$ for a candidate $k$-NBC and $\mathcal{B}(x)$ for a verified $k$-NBC.

\noindent\textbf{Samples and Dataset.} The state space $X$ is sampled for some positive integer $m$ number of data points $s$. The collection of these data points forms the dataset $S$.

\noindent\textbf{System Evolution.}
To verify the conditions of Proposition~\ref{thm:data_k-inductive-CBC}, it is necessary to evaluate the $1$-step and $k$-step evolutions of the system. For each data point $s\in S$, we acquire the state evolutions $s^+$ and $s^{k+}$ by evolving the system for $1$ or $k$ steps from initial state $s$, respectively. Collectively these datasets form a dataset triple $(S,~S^+,~S^{k+})$ that is used by the learner.

\noindent\textbf{Learner.} 
The learner trains a neural network to approximate the $k$-NBC function. When the loss function is optimally minimized, the resulting function serves as a candidate $k$-NBC. The candidate $k$-NBC $\mathcal{B}(\cdot)$ is trained to satisfy the conditions of Proposition~\ref{thm:data_k-inductive-CBC} for the dataset triple $(S,~S^+,~S^{k+})$. Although the candidate satisfies the conditions of Proposition~\ref{thm:data_k-inductive-CBC} for all $s \in S$, it is not necessarily valid for all states $x \in X$ and therefore requires verification.

\noindent\textbf{Verifier.}
The verifier formally checks if the candidate $\mathcal{B}(\cdot)$ violates the conditions of Proposition~\ref{thm:data_k-inductive-CBC} within the whole domain $X$. If a violation is detected, the verifier generates a set of counterexamples $C \subset X$ that includes at least one counterexample $c$. The set $C$ is then appended to the original dataset $S$ and System Evolution and Learner steps are repeated, encouraging the learner to generate a new candidate $k$-NBC for verification. This loop continues indefinitely until it is either manually terminated or the verifier confirms that no counterexamples exist, thereby proving the candidate $k$-NBC is valid across the entire state space, ensuring $\mathcal{B}(\cdot) = \mathcal{B}(x)$.

Verifying the $k$-CBC candidate over the entire state space $X$ using SMT solvers has required knowledge of the closed-form underlying dynamical system~\eqref{eq:dt-NS1}. By leveraging Lemma~\ref{lem:non-Q-matrix} and Lemma~\ref{lem:recursive-dt-NS}, we relax this requirement and instead use the model~\eqref{eq:closed-loop-dtNS}, which is derived solely from a single data trajectory.

\subsection{Training the Neural Network}

A neural network functions as an approximator by processing inputs through a series of weighted operations, biases, and activation functions to generate the desired output. The first main component of the CEGIS-SMT procedure is the \emph{learner}, which assigns weights and biases to the neural network according to a given loss function. The learner designs a candidate $k$-CBC using the hyper-parameters $d$ and $h_1,\ldots,h_d$, representing the depth and width of the neural network, where every connection between nodes has a weight and a bias. The learner optimizes weights and biases using backpropagation, via the Adam optimizer~\citep{kingma2014adam}, over the dataset triple $(S,~S^+,~S^{k+})$. It refines its training whenever the verifier includes new counterexamples.

The training procedure minimizes a loss function that is designed to satisfy the four conditions of Proposition~\ref{thm:data_k-inductive-CBC}. Conditions~\eqref{eq:safe_kCBC_cond1} and~\eqref{eq:safe_kCBC_cond2} apply only to specific subsets of the state space. Therefore, the relevant data points should be filtered accordingly, denoted as $S_{\mathcal I}$ and $S_{\mathcal U}$, respectively. The loss function $\mathcal{L}$ to minimize is then constructed as: \begin{equation}
	\label{eq:loss_function}
	\mathcal{L} = \mathcal{L}_{\mathcal{I}} + \mathcal{L}_{\mathcal{U}}+\mathcal{L}_1+\mathcal{L}_k,
\end{equation}
where: \begin{align*}
	\mathcal{L}_{\mathcal{I}} & = \frac{1}{\vert S_{\mathcal I}\vert}\sum_{s\in S_{\mathcal I}} \text{ReLU}(\mathcal{B}(s) + \eta_1),\\
	\mathcal{L}_{\mathcal{U}} & = \frac{1}{\vert S_{\mathcal U}\vert}\sum_{s\in S_{\mathcal U}} \text{ReLU}(-\mathcal{B}(s) + (k-1)\epsilon + \eta_2),\\
	\mathcal{L}_1 & = \frac{1}{\vert S\vert}\sum_{s\in S} \text{ReLU}(\mathcal{B}\big(\mathcal{X}_{1,T}Q(s)\mathfrak{D}(s)\big)-\mathcal{B}(s) - \epsilon + \eta_3),\\
	\mathcal{L}_k & = \frac{1}{\vert S\vert}\sum_{s\in S} \text{ReLU}(\mathcal{B}\big(\mathcal{X}_{1,T}Q(f_{k-1}(s))\mathfrak{D}(f_{k-1}(s))\big)-\mathcal{B}(s) + \eta_4),
\end{align*}
with the function ReLU$(\cdot)$ and $\eta_i\geq 0,~i=\{1,\ldots,4\}$, are tuning parameter that may increase the conservativeness of the desired function. Iterations of backpropagation are then performed until the value of the loss function is sufficiently small (ideally zero). Note that the loss functions $\mathcal{L}_1$ and $\mathcal{L}_k$ are applied across the entire state space. This is necessary because, until the function is generated, the states satisfying $\mathcal{B}(x) \leq 0$ cannot be determined.

The loss function may not always converge to zero. As a practical strategy, training under more conservative conditions can lead to a function that meets the requirements of Proposition~\ref{thm:data_k-inductive-CBC}, even if the loss does not reach zero.

\subsection{Verifying the Candidate}
The second component of the CEGIS procedure is the \emph{verifier}, which verifies a candidate is a valid $k$-NBC for all states $x$ in the state space, rather than only the sampled states $s$ as trained. Although a closed-form model of~\eqref{eq:dt-NS1} is unavailable, the single-trajectory approach gives a data-driven model~\eqref{eq:closed-loop-dtNS} that serves as a suitable substitute for the true system, given that the rank condition ensures persistently excited data. Since the structure, weights, biases, and activation functions of the neural network are known, a symbolic expression of the candidate $k$-NBC $\mathcal{B}(\cdot)$ can be extracted.

The goal of the SMT solver is to verify if this candidate is a valid $k$-NBC for the full state space, and if not to generate appropriate counterexamples. This check is performed by negating the conditions of Proposition~\ref{thm:data_k-inductive-CBC} and trying to solve for any solution. As the conditions are negated, a solution to the SMT solver is a generated counterexample, which can be passed to the learner to find a new candidate. If no counterexample can be found, the $k$-NBC $\mathcal{B}(x)$ is verified over the whole state space.

In natural language, we check if for a given state space $X$, any of the conditions in Proposition~\ref{thm:data_k-inductive-CBC} are violated. Using implication, and negating every condition, the conditions can be checked based on the regions where they apply. The two evolution conditions can be combined as they refer to the same set of states. The logical formula we check to find a counterexample can be written formally as follows:
\begin{equation}
	\texttt{cex} = (x\in X)\wedge (\neg\textbf{V}_I\vee\neg\textbf{V}_U\vee\neg\textbf{V}_{e1}\vee\neg\textbf{V}_{e2}),
\end{equation}
where \begin{align*}
	\textbf{V}_I :=&~ x\in X_I \implies \mathcal{B}(x) \leq 0, \quad\quad
	\textbf{V}_U :=~ x\in X_U \implies \mathcal{B}(x) > (k-1)\epsilon,\\
	\textbf{V}_{e1} :=&~ \mathcal{B}(x) \leq (k-1)\epsilon \implies
	(\mathcal{B}(\mathcal{X}_{1,T}Q(x)\mathfrak{D}(x)))\leq \mathcal{B}(x) + \epsilon)\\
    \textbf{V}_{e2} :=&~ \mathcal{B}(x) \leq 0 \implies
	(\mathcal{B}(\mathcal{X}_{1,T}Q(f_{k-1}(x))\mathfrak{D}(f_{k-1}(x)))) \leq \mathcal{B}(x)).
\end{align*}

If the above logical formula is satisfied, the SMT solver returns \texttt{sat} along with the \emph{identified counterexample}. If the solver returns \texttt{unsat}, \emph{no counterexample exists}, confirming that the $k$-NBC is valid for all $x \in X$.

\section{Simulation Results}
\label{sec:case-studies}

We demonstrate the efficacy of our approach through three case studies: a \emph{nonlinear polynomial} system, a \emph{physical pendulum} system, and a \emph{highly nonlinear} system. Due to space constraints, only the highly nonlinear system is presented in the main body of the paper as it is the most complex. Both the nonlinear polynomial system and the physical pendulum system are presented in Appendix~\ref{app:case-studies}.

\subsection{Highly Nonlinear System}

\begin{figure}[t]
	\centering
    \hspace{-1.5cm}
	\begin{minipage}[b]{0.48\linewidth}
		\centering
		\includegraphics[width=1.2\columnwidth]{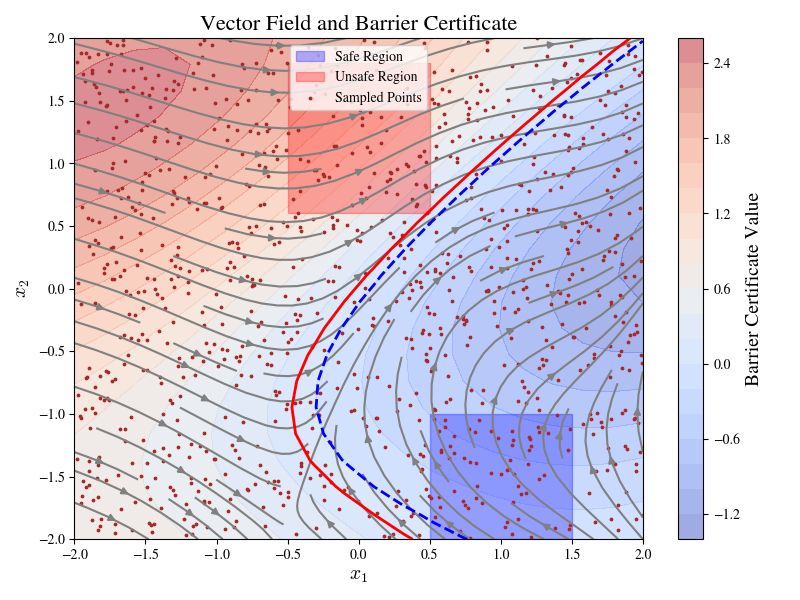}
		\centerline{(a) Initial candidate}
	\end{minipage}
    \hspace{1cm}
	\begin{minipage}[b]{0.48\linewidth}
		\centering
		\includegraphics[width=1.2\columnwidth]{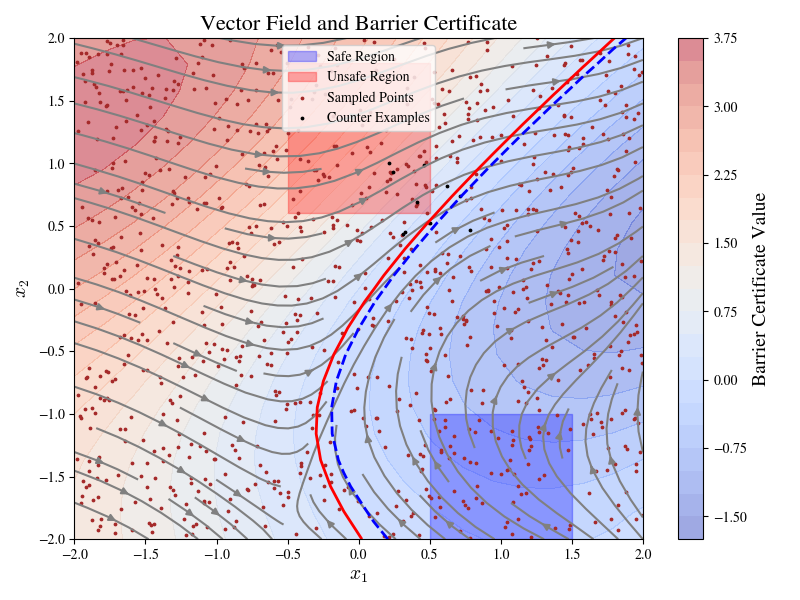}
		\centerline{(b) Verified certificate}
	\end{minipage}
	\caption{$k$-NBC results for the \emph{highly nonlinear system}. (a) \emph{Candidate} $k$-NBC showing violations where $X_{\mathcal{I}}$ and $X_\mathcal{U}$ intersect the level sets. Sampled points are depicted in brown. (b) \emph{Verified} $k$-NBC after retraining with counterexamples (black points). Both show the zero level set in dashed blue, unsafe level set in red, and true vector field in gray.}
	\label{fig:nonpoly2-combined}
\end{figure}

We present our results for the following academic system that is designed to demonstrate the generality of our approach.

\noindent\textbf{System Description.} We consider a discrete-time highly nonlinear system as
\begin{align*}
	\begin{bmatrix}
		x_1^+\\
		x_2^+
	\end{bmatrix} = \begin{bmatrix}
		x_1\\x_2
	\end{bmatrix} +  \Delta t\begin{bmatrix}
		x_2 + e^{-x_1}+\text{sin}^2(x_1)\\
		x_1 -\text{sin}^2(x_1) + \text{cos}^2(x_1)
	\end{bmatrix}\!\!, 
\end{align*}
where the sampling time $\Delta t$ is $0.1$ seconds. The model is assumed to be \emph{unknown}, but an extensive dictionary is available: $\mathfrak{D}(x) = [x_1, x_2, e^{-x_1}, e^{-x_2}, \text{sin}^2(x_1), \text{cos}^2(x_1)]$. We consider a safety specification over the state space $X=[-2,2]^2$ with an initial region $X_\mathcal I = [0.5,1.5]\times[-2,-1]$ and unsafe region $X_\mathcal U = [-0.5,0.5]\times[0.6,1.8]$. As our dictionary has $6$ terms, we consider a single state trajectory of length $7$, from initial state $x_0 = [0.5,-1]$. Using the datasets $\mathcal{X}_{0,T}$, $\mathcal{X}_{1,T}$, and $\mathcal{D}_{0,T}$, we solve~\eqref{eq:non-Q-matrix-X0}, obtaining the data-driven model $x^+ = \mathcal{X}_{1,T}Q(x)\mathfrak{D}(x)$.

\noindent\textbf{Initial Candidate.} We aim to find a $k$-NBC for this system, with $k=2$ and $\epsilon=0.1$. We fix the parameter $\eta_2 = 0.001$. We sample $1000$ data points from the data-driven model to form our dataset triple $(S,S^+,S^{k+})$. We also consider a neural network with two inputs $x_1,x_2$ (one per dimension), one hidden layer containing four nodes, and one output which is the evaluation of our candidate $k$-NBC. Each node $y$ evaluates the formula $y = g(b + \sum_{i=1}^2 x_iw_i)$, where, for each node, $b$ represents the bias, $w_i$ are the weights, and the activation function $g(\cdot)$ is $\text{sin}(\cdot)$ for two hidden nodes and $\text{cos}(\cdot)$ for the other two hidden nodes. For training, backpropagation is performed using the Adam optimizer~\citep{kingma2014adam}, running for $1000$ epochs with a learning rate of $0.1$. Verification with dReal~\citep{dreal} identified a counterexample, with $\delta$-bound set to $0.001$. As seen in Figure~\ref{fig:nonpoly2-combined}(a), the candidate $k$-NBC is invalid, as parts of the region $X_{\mathcal{I}}$ have barrier values above the blue-dashed zero level set. Additionally, the lower right corner of region $X_{\mathcal{U}}$ has $k$-NBC value below the red level set.

\noindent\textbf{Retraining.} From the range of the counterexample, $20$ points within a $0.1$ radius are appended to the dataset for retraining. The model is retrained for another $1000$ epochs with a reduced learning rate of $0.05$. On this iteration, the SMT solver \emph{did not} return a counterexample --- verifying the $k$-NBC $\mathcal{B}(x)$ as valid for all $x$, as seen in Figure~\ref{fig:nonpoly2-combined}(b). All results were generated and verified using the data-driven model. Allowing for rounding, the $k$-NBC is designed  as the following nonlinear non-polynomial function:
\begin{align*}
    \mathcal{B}(x) = &-0.55~\text{sin}(0.54x_1 - 1.32x_2 + 1.14) - 1.35~\text{sin}(0.58x_1 - 0.47x_2 + 0.29) \\&+ 0.65~\text{cos}(0.72x_1 - 0.06x_2 + 1.40) + 0.12~\text{cos}(0.80x_1 - 0.05x_2 + 1.31) + 0.99.
\end{align*}

\begin{remark}
    It naturally follows that using a larger neural network with a broader range of activation functions can further enhance the nonlinearity of $\mathcal{B}(x)$.
\end{remark}

\noindent\textbf{Conventional Barrier Certificate Solution.} We investigate whether the solved $\mathcal{B}(x)$ satisfies the conventional barrier certificate conditions (where $k=1$ and $\epsilon=0$). However, dReal identifies a counterexample within the range $[0.6389,0.6391]\times[-2,-1.9995]$. Consequently, only by applying the $k$-inductive conditions from Proposition~\ref{thm:data_k-inductive-CBC} could we obtain a valid relaxed solution.

As a general observation, across various simulations, we  found that the $k$-NBCs require fewer iterations than conventional NBCs and, in several cases, could identify a valid $k$-NBC on the first attempt.

\section{Conclusion}
\label{sec:conclusion}

We proposed a data-driven method for constructing $k$-inductive neural barrier certificates and verify them using the neuro-symbolic CEGIS-SMT framework, without requiring an \emph{a priori} known model of the underlying system dynamics. By leveraging persistently excited data, a single state trajectory was used to derive a data-driven model of the dynamics, ensuring validity for verification with the SMT solver. Additionally, $k$-induction was employed as a relaxation of conventional discrete-time barrier certificate conditions, and our experiments demonstrated that it can find certificates that conventional barrier certificates fail to design. Neural approaches enabled the discovery of \emph{general} nonlinear barrier certificates for discrete-time systems with \emph{unknown} general nonlinear dynamics. While our paper broadens the applicability of CEGIS-SMT, the existing scalability limitations of the framework remain; nevertheless, the pipeline can be extended by replacing the SMT solver with alternative verification tools (\emph{e.g.,} scenario approach). Future work can consider $t$-barriers, a challenging continuous-time analogy to $k$-NBCs requiring derivative relaxations~\citep{bak2018t}.

\acks{The material presented in this paper is based upon work supported by the National Science Foundation (NSF) through grant number 2220401 and the Defense Advanced Research Projects Agency (DARPA) under contract number FA8750-23-C-0518. Any opinions, findings, and conclusions or recommendations expressed in this paper are those of the authors and do not necessarily reflect the views of DARPA or NSF. Early stages of this work were supported by an EPSRC Doctoral Prize Research Fellowship.}

\bibliography{Bibliography}

\appendix
\section{Proofs}
\label{app:proofs}

\subsection{Proof for Lemma~\ref{lem:non-Q-matrix}}

	The closed-loop dt-NS can be written as
	\begin{align*}
		A\mathfrak{D}(x) = 
		A\overbrace{\mathcal{D}_{0,T}Q(x)}^{\mathbb{I}_N }\mathfrak{D}(x) = \mathcal{X}_{1,T}Q(x)\mathfrak{D}(x),
	\end{align*}
	with $\mathcal{X}_{1,T} = A\mathcal{D}_{0,T}$ as in~\eqref{eq:dictionary-data}. Therefore, $x^+ = \mathcal{X}_{1,T}Q(x)\mathfrak{D}(x)$ is the data-based representation of the closed loop dt-NS, completing the proof.\hfill $\blacksquare$

\subsection{Proof for Lemma~\ref{lem:recursive-dt-NS}}

	Since $x^{k+}$ is obtained by applying $k$ iterations of $A\mathfrak{D}(x)$, it follows that
	\begin{align*}
		x^+ &= A\mathfrak{D}(x) = \mathcal{X}_{1,T}Q(x)\mathfrak{D}(x) = f_1(x), \\
		x^{2+} &= \mathcal{X}_{1,T}Q(x^+)\mathfrak{D}(x^+) \!=\! \mathcal{X}_{1,T}Q(f_1(x))\mathfrak{D}(f_1(x)) \!=\! f_2(x),
	\end{align*}
	and by induction, we have
	\begin{equation*}
		\label{eq:non-k-step}
		x^{k+} = \mathcal{X}_{1,T}Q(f_{k-1}(x))\mathfrak{D}(f_{k-1}(x)) =f_k(x),
	\end{equation*} completing the proof.\hfill $\blacksquare$

\subsection{Proof for Proposition~\ref{thm:data_k-inductive-CBC}}

Proposition~\ref{thm:data_k-inductive-CBC} builds upon the definition of $k$-NBCs found in~\citep{Anand2021kSafety}:

\begin{definition}[$k$-BC~\citep{Anand2021kSafety}]
        We say that a function $\mathcal{B} : X \rightarrow \reals_{\geq0}$ is a
$k$-inductive barrier certificate for the system $\Sigma$ with respect
to a set of initial states $X_{\mathcal I} \subseteq X$ and a set of unsafe states
$X_{\mathcal{U}} \subseteq X$, if there exist $k\in\nats_{>0}$ and $\epsilon\in\reals_{\geq0}$, such
that following conditions hold:
\begin{subequations}
\begin{align}
    &\mathcal B(x) \leq 0, &\forall x \in X_{\mathcal I}, \nonumber\\
&\mathcal B(x) > (k-1)\epsilon, &\forall x \in X_{\mathcal{U}}, \nonumber\\
\label{eq:proof1}
&\mathcal B(f_1(x)) - \mathcal B(x) \leq \epsilon, & \forall x \in X, \\
\label{eq:proof2}
&\mathcal B(f_k(x)) - \mathcal B(x) \leq 0, &\forall x \in X.
\end{align}
\end{subequations}
\end{definition}
Substituting the result of Lemma~\ref{lem:non-Q-matrix} into~\eqref{eq:proof1} and Lemma~\ref{lem:recursive-dt-NS} into~\eqref{eq:proof2} gives conservative certificate conditions. To achieve our further relaxed conditions, we enforce~\eqref{eq:proof1} and~\eqref{eq:proof2} when  $\mathcal B(x) \leq (k-1)\epsilon$ and $\mathcal B(x) \leq 0$, respectively --- enforcing the condition to hold only for states reachable from the initial set $X_\mathcal{I}$.  \hfill$\blacksquare$

\subsection{Proof for Theorem~\ref{thm:k-ind-proof}}

By condition~\eqref{eq:safe_kCBC_cond1} and~\eqref{eq:safe_kCBC_cond2}, the initial and unsafe regions are distinctly separated by two level sets of $\mathcal{B}(x)$, with the initial condition level set \eqref{eq:safe_kCBC_cond1} of lower value than the unsafe region's level set. 

Condition~\eqref{eq:safe_kCBC_cond4} enforces the system to be strictly decreasing in $\mathcal{B}(x)$ value after $k$ steps, while a one step transition may increase by a value of $\epsilon$ \eqref{eq:safe_kCBC_cond3}. The maximal increase in $\mathcal{B}(x)$ value as the system evolves is therefore $k-1$ increases by a value of $\epsilon$. As the unsafe region value is at least $(k-1)\epsilon$, and the largest initial value is $0$, a $k$-BC certificate always guarantees safety. \hfill$\blacksquare$

\section{Linear Systems}
\label{app:linear-systems}

We separate the discrete-time linear system (dt-LS) class from the more general dt-NS class presented in the main paper. For dt-LS, the $k$-step evolution of the system has a simple closed-form solution for any $k$. We consider the following lemma for the dt-LS case, \emph{i.e.} $\mathfrak{D}(x) = x$, to obtain a data-based representation of a closed-loop dt-LS.

\begin{lemma}
	\label{cor:Q-matrix}
	    Let matrix $Q$ be a $(T\times n)$ matrix such that
	    \begin{equation}
		    \label{eq:Q-matrix-X0}
		        \mathbb{I}_n = \mathcal{X}_{0,T}Q,
		    \end{equation}
	    where $\mathcal{X}_{0,T}$ is an $(n\times T)$ full row-rank matrix. The closed-loop system $x^+=Ax$ has the following data-based representation:
	    \begin{equation*}
		        x^+ = \mathcal{X}_{1,T}Qx,~\text{equivalently},~A = \mathcal{X}_{1,T}Q.
		    \end{equation*}
	
	\noindent Additionally, the data-based evolution of the system after $k$ steps is \begin{equation}
		    	\label{eq:k-step}
		    	x^{k+} = (\mathcal{X}_{1}Q)^kx.
		    \end{equation}
	\end{lemma}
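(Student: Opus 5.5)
This is the linear specialization of Lemma~\ref{lem:non-Q-matrix} and Lemma~\ref{lem:recursive-dt-NS}, so I would follow the same template with $\mathfrak{D}(x)=x$ and $N=n$. In that case the dictionary data matrix $\mathcal{D}_{0,T}$ coincides with $\mathcal{X}_{0,T}$. The argument has three steps: express the data in terms of the unknown $A$, substitute the identity \eqref{eq:Q-matrix-X0}, and iterate.

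\textbf{Step 1: relate the data to $A$.} Since the trajectory is generated by $x(t+1)=Ax(t)$ for $t=0,\dots,T-1$, stacking these equalities column by column gives
\[
\mathcal{X}_{1,T}=A\mathcal{X}_{0,T}.
\]
This is the linear analogue of $\mathcal{X}_{1,T}=A\mathcal{D}_{0,T}$ used in the proof of Lemma~\ref{lem:non-Q-matrix}.

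\textbf{Step 2: existence of $Q$ and the representation of $A$.} Because $\mathcal{X}_{0,T}$ has full row rank $n$, a right inverse $Q$ satisfying \eqref{eq:Q-matrix-X0} exists; for instance $Q=\mathcal{X}_{0,T}^\top(\mathcal{X}_{0,T}\mathcal{X}_{0,T}^\top)^{-1}$. Then
\[
A=A\,\mathbb{I}_n=A\mathcal{X}_{0,T}Q=\mathcal{X}_{1,T}Q.
\]
This gives $x^+=Ax=\mathcal{X}_{1,T}Qx$ for every $x$. Unlike the nonlinear case, $Q$ does not depend on $x$, so the identity holds at the level of matrices. The equivalence "$A=\mathcal{X}_{1,T}Q$" therefore holds in both directions: the forward direction is the computation above, and the reverse direction follows by multiplying by $x$.

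\textbf{Step 3: $k$-step evolution.} I would argue by induction on $k$. The base case $k=1$ is Step 2. For the inductive step, assume $x^{k+}=(\mathcal{X}_{1,T}Q)^k x$. Then
\[
x^{(k+1)+}=A\,x^{k+}=(\mathcal{X}_{1,T}Q)(\mathcal{X}_{1,T}Q)^k x .
\]
Alternatively, one can apply Lemma~\ref{lem:recursive-dt-NS} with $\mathfrak{D}(x)=x$ and constant $Q$, which collapses the recursion $f_k(x)=\mathcal{X}_{1,T}Q\,f_{k-1}(x)$ to a matrix power. The statement writes $\mathcal{X}_1$, which I read as shorthand for $\mathcal{X}_{1,T}$.

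\textbf{Main obstacle.} There is no substantive difficulty. The only point requiring care is justifying existence of $Q$ from the full-row-rank assumption and noting that the conclusion holds for any right inverse, not only the canonical one.
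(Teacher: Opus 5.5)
Your proposal is correct and follows the paper's route: the paper's proof only says to repeat the arguments of Lemma~\ref{lem:non-Q-matrix} and Lemma~\ref{lem:recursive-dt-NS} with $\mathfrak{D}(x)=x$, which is exactly your $\mathcal{X}_{1,T}=A\mathcal{X}_{0,T}$, then $A=A\mathcal{X}_{0,T}Q=\mathcal{X}_{1,T}Q$, then induction to get the matrix power. Your remarks on the existence of a right inverse and on the result not depending on which $Q$ is chosen are valid details that the paper leaves implicit.
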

\begin{proof}
	The proof is straightforward following the proofs of Lemma~\ref{lem:non-Q-matrix} and Lemma~\ref{lem:recursive-dt-NS}.
	\end{proof}

\begin{remark}
	    Similar to the dt-NS case, to enforce $\mathcal{X}_{0,T}$ to be full row-rank, the number of samples $T$ should be at least $n$.
\end{remark}

For dt-LSs, the $k$-BC conditions leverage Lemma~\ref{cor:Q-matrix}, yielding a simple closed-form expression for the last two conditions.
\begin{proposition}
	\label{cor:dt-LS-barrier-conditions}
	    Given a dt-LS $\Sigma$, a function $\mathcal{B}: X\rightarrow \reals$ is a $k$-inductive barrier certificate ($k$-BC) for $\Sigma$ if there exist $k\in\nats_{>0}$, $\epsilon\in\reals_{\geq0}$, such that
        \begin{subequations}
        \begin{align}
			\mathcal{B}(x)& \leq 0, &\quad\quad\quad\forall x\in X_{\mathcal I}, \\
			\mathcal{B}(x)& > (k-1)\epsilon, &\quad\quad\quad\forall x\in X_{\mathcal U}, \\
			    \label{eq:dtLS_kCBC_cond3}
			    \mathcal{B}(\mathcal{X}_{1}Qx) &\leq \mathcal{B}(x) + \epsilon,\quad &\forall x\in X~\text{where}~\mathcal{B}(x) \leq (k-1)\epsilon,  \\
			    \label{eq:dtLS_kCBC_cond4}
			    \mathcal{B}((\mathcal{X}_{1}Q)^kx) &\leq \mathcal{B}(x) \quad &\forall x\in X~\text{where}~\mathcal{B}(x) \leq 0.
			    \end{align}
		    \end{subequations}
	\end{proposition}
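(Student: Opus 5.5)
The plan is to reduce Proposition~\ref{cor:dt-LS-barrier-conditions} to the general nonlinear case, Proposition~\ref{thm:data_k-inductive-CBC}, by specializing the dictionary to $\mathfrak{D}(x)=x$. The computation is then supplied by Lemma~\ref{cor:Q-matrix}. With this choice $N=n$ and $\mathcal{D}_{0,T}=\mathcal{X}_{0,T}$. The defining relation~\eqref{eq:non-Q-matrix-X0} becomes~\eqref{eq:Q-matrix-X0}, and it can be solved by a \emph{constant} matrix $Q$ that does not depend on $x$. Concretely, take $Q$ to be any right inverse of the full row-rank matrix $\mathcal{X}_{0,T}$.

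First I will invoke Lemma~\ref{cor:Q-matrix}, which gives $\mathcal{X}_{1,T}=A\mathcal{X}_{0,T}$ and hence $A=A\mathcal{X}_{0,T}Q=\mathcal{X}_{1,T}Q$. This yields the one-step map $f_1(x)=\mathcal{X}_{1,T}Qx$. Substituting this into condition~\eqref{eq:safe_kCBC_cond3} of Proposition~\ref{thm:data_k-inductive-CBC} turns it into~\eqref{eq:dtLS_kCBC_cond3}, with the same quantifier region $\{x\in X:\mathcal{B}(x)\le (k-1)\epsilon\}$.

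Next I will unfold the recursion of Lemma~\ref{lem:recursive-dt-NS} with constant $Q$. The recursion reads $f_j(x)=\mathcal{X}_{1,T}Q\,f_{j-1}(x)$, and induction on $j$ gives $f_j(x)=(\mathcal{X}_{1,T}Q)^j x$. This is exactly~\eqref{eq:k-step}; the statement's $\mathcal{X}_1$ abbreviates $\mathcal{X}_{1,T}$. Substituting into~\eqref{eq:safe_kCBC_cond4} gives~\eqref{eq:dtLS_kCBC_cond4} on $\{x:\mathcal{B}(x)\le 0\}$. The initial and unsafe conditions are unchanged.

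All four conditions of Proposition~\ref{thm:data_k-inductive-CBC} therefore hold, so $\mathcal{B}$ is a $k$-BC, and Theorem~\ref{thm:k-ind-proof} then delivers safety. The only point needing care is the identification $\mathcal{X}_{1,T}=A\mathcal{X}_{0,T}$, which holds because the data are noise-free samples of $x^+=Ax$. Beyond that I expect no real obstacle: the argument is a direct substitution, and the only real content is that $Q$ being state-independent collapses the nested compositions into a matrix power.
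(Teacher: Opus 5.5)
Your proposal is correct and takes essentially the paper's approach. The paper's proof just says to follow the proof of Proposition~\ref{thm:data_k-inductive-CBC}: substitute the linear data-based one-step map $\mathcal{X}_{1,T}Qx$ and the $k$-step map $(\mathcal{X}_{1,T}Q)^k x$ from Lemma~\ref{cor:Q-matrix} into the $k$-BC conditions on the relaxed regions, which your specialization $\mathfrak{D}(x)=x$ with a constant right inverse $Q$ does exactly; the only difference is that you invoke the general proposition as a black box instead of re-running its substitution into the original $k$-BC definition.
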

\begin{proof}
The proof is straightforward, following the proof of Proposition~\ref{thm:data_k-inductive-CBC}.
\end{proof}

\begin{SCfigure}[1.2][h!]
	\includegraphics[scale=0.45]{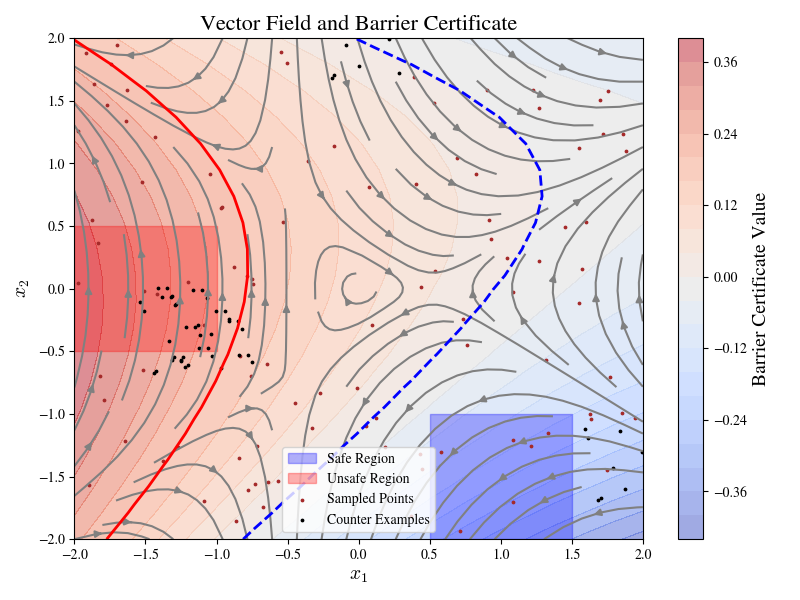}
	\caption{\emph{Verified} $k$-NBC for the \emph{polynomial system}, with the zero level set shown in dashed blue and the unsafe level set in red. Original sampled points are depicted in brown, while counterexamples are in black. The $k$-NBC was identified in the 7th iteration. The contour plot represents barrier certificate values across all states, and the true vector field is illustrated in gray.}\vspace{-1cm}
	\label{fig:poly_ex}
\end{SCfigure}
\vspace{0.5cm}

\section{Additional Case Studies}
\label{app:case-studies}

\subsection{Polynomial System}

\noindent\textbf{System Description.} We first consider a discrete-time polynomial system as
\begin{align*}
	\begin{bmatrix}
		x_1^+\\
		x_2^+
	\end{bmatrix} = \begin{bmatrix}
		x_1\\x_2
	\end{bmatrix} +  \Delta t\begin{bmatrix}
		x_2 + 2x_1x_2\\
		-x_1 + 2x_1^2 -2x_2^2
	\end{bmatrix}\!\!, 
\end{align*}
where the sampling time $\Delta t$ is $0.1$ seconds. The model is assumed to be \emph{unknown}, but an extensive dictionary of terms is available: $\mathfrak{D}(x) = [x_1, x_2, x_1x_2, x_1^2, x_2^2]$. This dictionary is constructed based on a first-principles understanding of the system, including all monomials up to degree 2. We consider a safety specification over the state space $X=[-2,2]^2$ with an initial region $X_\mathcal I = [0.5,1.5]\times[-2,-1]$ and unsafe region $X_\mathcal U = [-2,-1]\times[-0.5,0.5]$. As our dictionary has $5$ terms, we consider a single state trajectory of length $6$, from initial state $x_0 = [0.5,-2]$. Using the datasets $\mathcal{X}_0$, $\mathcal{X}_1$, and $\mathcal{D}_0$, we solve~\eqref{eq:non-Q-matrix-X0}. We now obtain the data-driven model $x^+ = \mathcal{X}_1Q(x)\mathfrak{D}(x)$.

\noindent\textbf{Initial Candidate.} We aim to find a $k$-NBC for this system, with $k=3$ and $\epsilon=0.1$. For conservatism, the parameters are set to $\eta_1 = 0.1$ and $\eta_2 = 0.001$. We sample $100$ data points from the data-driven model to form our dataset triple $(S,S^+,S^{k+})$. We also consider a neural network with two inputs $x_1,x_2$ (one per dimensions), $1$ hidden layer containing $2$ nodes, and one output which is the evaluation of our candidate $k$-NBC. Each node $y$ evaluates the formula \begin{equation}
	\label{eq:activation-function}
	y = (b + \sum_{i=1}^2 x_iw_i)^2,
\end{equation} 
where, for each node: $b$ represents the bias, $w_i$ are the weights, and the activation function is quadratic. For training, backpropagation is performed using the Adam optimizer~\citep{kingma2014adam}, running for $1000$ epochs with a learning rate of $0.1$. Verification with Z3~\citep{z3} identified a counterexample at $[0,\frac{15}{8}]$.

\noindent\textbf{Retraining.} Prompted by this counterexample, it along with 20 points within a 0.1 radius are added to the dataset. The model is then retrained for another $1000$ epochs with a reduced learning rate of $0.05$. This loop was repeated for each counterexample until iteration $7$ where the SMT solver Z3 returned \texttt{unsat} verifying the $k$-NBC $\mathcal{B}(x)$ as valid for all $x$, as seen in Figure~\ref{fig:poly_ex}. In the figure, the true vector field is plotted but all results were generated and verified using the data-driven model. Allowing for rounding, the $k$-NBC is designed as
\[\mathcal{B}(x) = 0.02x_1^2 + 0.02x_1x_2 - 0.12x_1 - 0.04x_2^2 + 0.04x_2 + 0.10\]
which is not sum-of-squares, since its quadratic form is not positive semidefinite.

\noindent\textbf{Conventional Barrier Certificate Solution.} We investigate whether $k$-NBC satisfies the conventional barrier certificate conditions (where $k=1$ and $\epsilon=0$). However, Z3 identifies a counterexample at $[0,\frac{255}{128}]$.

\subsection{Pendulum System}

\begin{SCfigure}[1.2][h]
	\includegraphics[scale=0.5]{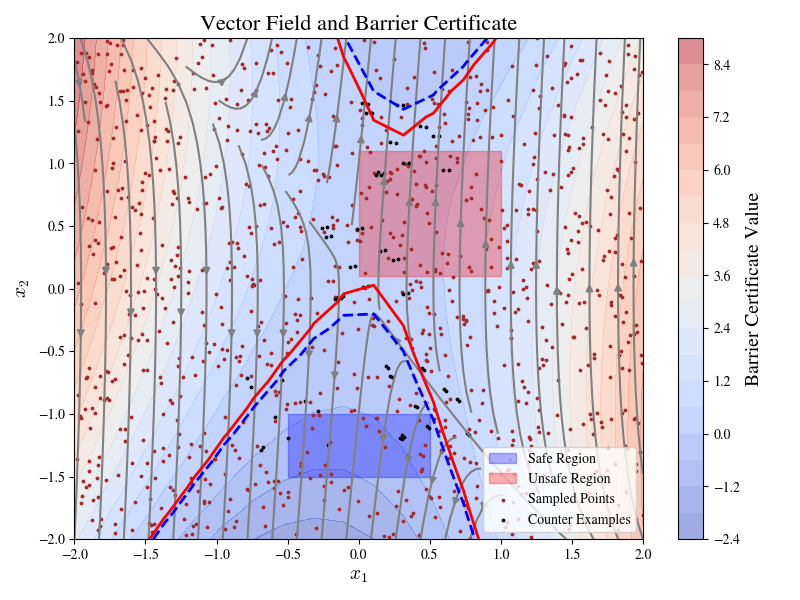}
	\caption{\emph{Verified} $k$-NBC for the \emph{pendulum system}, with the zero level set in dashed blue and the unsafe level set in red. Original sampled points are shown in brown, and counterexamples in black. The $k$-NBC was obtained after one retraining. The contour plot illustrates barrier certificate values across all states, while the true vector field is depicted in gray.}
	\label{fig:nonpoly_ex}\vspace{-0.9cm}
\end{SCfigure}
\vspace{0.5cm}

 \noindent\textbf{System Description.} For our second case study, we consider a controlled pendulum system as
\begin{equation*}
	\begin{bmatrix}
		x_1^+\\x_2^+\end{bmatrix} = \begin{bmatrix}
		x_1\\x_2
	\end{bmatrix} + \Delta t\begin{bmatrix}
		x_2\\
		\text{sin}(x_1)g + -\frac{b}{m}x_2 + \frac{-gmlx_1 + bx_2}{ml}
	\end{bmatrix}\!\!, 
\end{equation*}
where $x_1$ is the pendulum angle, $x_2$ is the angular velocity, $\Delta t = 0.1$ is the sampling time, $g=9.81$ is the gravitational acceleration, $m=1$ is the mass of the pendulum, $l=0.1$ is the pendulum's length, and $b=1.0$ is the damping coefficient.

Again, we consider the system to be \emph{unknown}, but we have an extensive dictionary of terms $\mathfrak{D}(x) = [x_1,x_2,~\text{sin}(x_1),~\text{cos}(x_1)]$. We consider a safety specification with regions $X=[-2,2]^2, X_I = [-0.5,0.5]\times[-1.5,-1]$ and $X_U = [0,1]\times[0.1,1.1]$. As the extensive dictionary has $4$ terms, we take a single state trajectory of length $5$ from initial state $[0.5,-1.5]$.

\noindent\textbf{Initial Candidate.} To find a $k$-NBC, we set $k=2$, $\epsilon=0.1$, $\eta_1 = 0.1$, and $\eta_2 = 0.001$. We sample $1000$ data points to form our dataset triple $(S,S^+,S^{k+})$. We consider a neural network with two inputs $x_1,x_2$ (one per dimensions), $1$ hidden layer containing $32$ nodes with the same description as~\eqref{eq:activation-function}, and one output which is the evaluation of our candidate $k$-NBC. Using the Adam optimizer, we run $1000$ epochs with learning rate $0.1$ and verify candidates using the dReal~\citep{dreal} with $\delta$-bound set to $0.001$. A counterexample is found on the first iteration in the range $[0.4811,0.4813]\times[-1.0002,-1.0000]$.

\noindent\textbf{Retraining.} From the range of the counterexample, $10$ additional points are added to the dataset for retraining of another $1000$ epochs with learning rate $0.05$. The process is repeated until iteration $8$ where the SMT solver returns no counterexample, verifying the $k$-NBC valid for all $x$, as seen in Figure~\ref{fig:nonpoly_ex}. The verified $k$-NBC $\mathcal{B}(x)$ is omitted here due to its length.

\noindent\textbf{Conventional Barrier Certificate Solution.} We investigate whether $k$-NBC satisfies the conventional barrier certificate conditions (where $k=1$ and $\epsilon=0$). However, dReal identifies a counterexample in the range $[0.282,0.284]\times[-0.438,-0.436]$.

\end{document}